\definecolor{mygreen}{RGB}{10,150,110}
\definecolor{myred}{RGB}{150,10,20}
\renewcommand{\epsilon}{\varepsilon}
\newcommand{\hiddencomment}[1]{}
\newcommand{\ep}{\mathsf{e}}
\newcommand{\vp}{\mathsf{v}}
\newcounter{lpcounter}
\newenvironment{lp}{
\refstepcounter{lpcounter}
\begin{centering}
\begin{halfwhitetbox}
\textbf{LP \thelpcounter:}%
}{
\end{halfwhitetbox}
\end{centering}} 
\crefname{lpcounter}{LP}{LPs}
\newcommand{\card}[1]{\left\lvert#1\right\rvert}
\crefname{lemma}{Lemma}{Lemmas}
\crefname{theorem}{Theorem}{Theorems}
\crefname{property}{Property}{Properties}
\crefname{claim}{Claim}{Claims}
\crefname{result}{Result}{Results}
\crefname{definition}{Definition}{Definitions}
\crefname{observation}{Observation}{Observations}
\crefname{proposition}{Proposition}{Propositions}
\crefname{assumption}{Assumption}{Assumptions}
\crefname{line}{Line}{Lines}
\crefname{figure}{Figure}{Figures}
\crefname{equation}{}{}
\crefname{section}{Section}{Sections}
\crefname{appendix}{Appendix}{Appendices}
\crefname{algCounter}{Algorithm}{Algorithms}
\Crefname{algCounter}{Algorithm}{Algorithms}
\newtheorem{theorem}{Theorem}
\newtheorem{lemma}{Lemma}[section]
\newtheorem{proposition}[lemma]{Proposition}
\newtheorem{definition}[lemma]{Definition}
\newtheorem{claim}[lemma]{Claim}
\newtheorem*{remark*}{Remark}
\definecolor{mylightgray}{RGB}{230,230,230}
\algnewcommand{\IIf}[2]{\textbf{if} #1 \textbf{then} #2}
\algnewcommand{\EndIIf}{\unskip\ \algorithmicend\ \algorithmicif}
\newenvironment{whitetbox}{
\par\addvspace{0.1cm}
\begin{tcolorbox}[width=\textwidth,
                  boxsep=5pt,
                  left=1pt,
                  right=1pt,
                  top=2pt,
                  bottom=2pt,
                  boxrule=1pt,
                  arc=0pt,
                  colframe=black,
                  colback=white
                  ]
}{
\end{tcolorbox}
}
\newcounter{algCounter}
\newenvironment{halfwhitetbox}{
\par\addvspace{0.1cm}
\begin{tcolorbox}[width=0.7\columnwidth,
                  boxsep=5pt,
                  left=1pt,
                  right=1pt,
                  top=2pt,
                  bottom=2pt,
                  boxrule=1pt,
                  arc=0pt,
                  colframe=black,
                  colback=white
                  ]
}{
\end{tcolorbox}
}
\renewcommand{\paragraph}{%
  \@startsection{paragraph}{4}%
  {\z@}{10pt}{-1em}%
  {\normalfont\normalsize\bfseries}%
}
\title{Bipartite Matching in Massive Graphs:\\ A Tight Analysis of EDCS}
\author{
Amir Azarmehr\\{\em Northeastern University} \and 
Soheil Behnezhad \\{\em Northeastern University} \and
Mohammad Roghani \\{\em Stanford University}
}
\date{}
\begin{document}

\maketitle

\thispagestyle{empty}
\begin{abstract}
Maximum matching is one of the most fundamental combinatorial optimization problems with applications in various contexts such as balanced clustering, data mining, resource allocation, and online advertisement. In many of these applications, the input graph is massive. The sheer size of these inputs makes it impossible to store the whole graph in the memory of a single machine and process it there. Graph sparsification has been an extremely powerful tool to alleviate this problem. In this paper, we study a highly successful and versatile sparsifier for the matching problem: the {\em edge-degree constrained subgraph (EDCS)} introduced first by Bernstein and Stein [ICALP'15].

\smallskip\smallskip
The EDCS has a parameter $\beta \geq 2$ which controls the density of the sparsifier. It has been shown through various proofs in the literature that by picking a subgraph with $O(n\beta)$ edges, the EDCS includes a matching of size at least $2/3-O(1/\beta)$ times the maximum matching size. As such, by increasing $\beta$ the approximation ratio of EDCS gets closer and closer to $2/3$.

\smallskip\smallskip
In this paper, we propose a new approach for analyzing the approximation ratio of EDCS. Our analysis is {\em tight} for any value of $\beta$. Namely, we pinpoint the precise approximation ratio of EDCS for any sparsity parameter $\beta$. Our analysis reveals that one does not necessarily need to increase $\beta$ to improve approximation, as suggested by previous analysis. In particular, the best choice turns out to be $\beta = 6$, which achieves an approximation ratio of $.677$! This is arguably surprising as it is even better than $2/3 \sim .666$, the bound that was widely believed to be the limit for EDCS.
\end{abstract}

{
\clearpage
\hypersetup{hidelinks}
\vspace{1cm}
\renewcommand{\baselinestretch}{0.1}
\setcounter{tocdepth}{2}
\thispagestyle{empty}
\clearpage
}

\setcounter{page}{1}
\section{Introduction} \label{sec:introduction}

Maximum matching is one of the most fundamental combinatorial optimization problems. Recall that a {\em matching} in a graph is a collection of edges that do not share any vertices. A {\em maximum} matching is a matching of the largest possible size. 

The matching problem finds applications in various contexts such as data mining, resource allocation, online advertisement, bioinformatics, and many others. For instance, maximum matching can improve the quality of data clustering \cite{AssadiBM19}, it can produce fair  $k$-center clustering \cite{JonesNN20}, or can be used to discover subgraphs for bioinformatics applications \cite{BergerSX08,LangmeadD04}. In most of these applications, the input graph is massive. The sheer size of these inputs makes it impossible to store the whole graph in the memory of a single machine and process it there.  This has motivated a large and beautiful body of work over the past two decades on large-scale algorithms for this problem.

\paragraph{Graph Sparsification:} Graph sparsification is a powerful tool to process massive graphs. A graph sparsifier receives an $n$-vertex graph that may have as many as $\Omega(n^2)$ edges and sparsifies it into a sparse subgraph, say with $O(n)$ edges, that preserves some property of it. Graph sparsifiers have been instrumental tools for various graph problems. Cut sparsifiers \cite{NagamochiI92}, spectral sparsifiers \cite{SpielmanT11}, and spanners \cite{AbboudB17} are some famous examples. Graph sparsifiers did not find many applications for matchings until nearly a decade ago when \citet{BernsteinS15} introduced the {\em edge-degree constrained subgraph (EDCS)}. See in particular the nice paper of \citet{AssadiB19} for an in-depth introduction to EDCS and an overview of some of its applications.

Over the years, the EDCS has been successfully applied to a variety of large-scale settings including the massively parallel computations (MPC) setting which is a common theoretical model of MapReduce-style computation \cite{AssadiBBMS19}, the dynamic setting \cite{BernsteinS15, BehnezhadK22, RSW22}, the streaming setting \cite{Bernstein20,AssadiB21}, the sublinear time setting \cite{BehnezhadRR23,BhattacharyaKS23}, communication complexity \cite{azarmehrBehnezhad},  and the stochastic matching setting \cite{AssadiB19}.

In this paper, we revisit the key property of EDCS: that it obtains a good approximation of maximum matching while at the same time being sparse. To put our results into perspective, we first need to provide some background and overview existing bounds.

\subsection*{Background}

Let us start by stating the formal definition of EDCS. 

\begin{definition}[\citet{BernsteinS15}]\label{def:EDCS}
    \label{def:edcs}
    Given a graph $G$, a subgraph $H \subseteq G$ is an {\em edge-degree constrained subgraph} with parameters $(\beta, \beta^-)$, or a $(\beta, \beta^-)$-EDCS,
    if the following conditions hold:
    \begin{enumerate}
        \item for all edges $(u, v) \in H$,  $\deg_H(u) + \deg_H(v) \leq \beta$, and
    \item for all edges $(u, v) \in G \setminus H$,  $\deg_H(u) + \deg_H(v) \geq \beta^-$.
    \end{enumerate}
\end{definition}

The following proposition shows that a $(\beta, \beta^-)$-EDCS always exists for all integers $\beta > \beta^- \geq 1$.

\begin{proposition}[\citet{BernsteinS15}]\label{prop:existence}
Any graph $G$ contains a $(\beta, \beta^-)$-EDCS
for any integers $\beta > \beta^- \geq 1$,
and one can be found greedily in polynomial time.
\end{proposition}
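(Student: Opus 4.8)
The plan is to establish existence constructively through a potential-function (monovariant) argument, which simultaneously certifies the greedy polynomial-time construction. I would work with the potential
\[
\Phi(H) = (2\beta - 1)\card{H} - \sum_{v} \deg_H(v)^2,
\]
defined over all subgraphs $H \subseteq G$, together with the identity $\sum_v \deg_H(v)^2 = \sum_{(u,v) \in H} \paren{\deg_H(u) + \deg_H(v)}$, which follows by charging each edge's endpoint-degree-sum back to its two endpoints. The greedy procedure starts from $H = \emptyset$ and repeatedly repairs a single violation: if some edge $(u,v) \in H$ violates Condition~1, so that $\deg_H(u) + \deg_H(v) \geq \beta + 1$, it deletes that edge; and if some edge $(u,v) \in G \setminus H$ violates Condition~2, so that $\deg_H(u) + \deg_H(v) \leq \beta^- - 1$, it inserts that edge. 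Once no violating edge remains, $H$ satisfies both conditions of \Cref{def:edcs} and is thus a $(\beta, \beta^-)$-EDCS.

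The crux is to verify that each repair strictly increases $\Phi$. Writing $d_u = \deg_H(u)$ and $d_v = \deg_H(v)$ for the degrees in the current $H$, a deletion changes $\card{H}$ by $-1$ and $\sum_w \deg_H(w)^2$ by $-2(d_u + d_v) + 2$, giving
\[
\Delta \Phi = -(2\beta - 1) + 2(d_u + d_v) - 2 \;\geq\; 2(\beta + 1) - 2\beta - 1 \;=\; 1,
\]
using $d_u + d_v \geq \beta + 1$. Symmetrically, an insertion changes $\card{H}$ by $+1$ and $\sum_w \deg_H(w)^2$ by $2(d_u + d_v) + 2$, giving
\[
\Delta \Phi = (2\beta - 1) - 2(d_u + d_v) - 2 \;\geq\; 2\beta - 3 - 2(\beta^- - 1) \;=\; 2(\beta - \beta^-) - 1 \;\geq\; 1,
\]
where the final inequality is exactly where the hypothesis $\beta > \beta^-$ (and integrality, so that $\beta - \beta^- \geq 1$) enters. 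Hence every step raises the integer-valued potential by at least $1$.

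To finish, I would bound the iteration count. Starting from $\Phi(\emptyset) = 0$ and using $\Phi(H) \leq (2\beta - 1)\card{H} \leq (2\beta - 1)\binom{n}{2}$, the potential can increase at most $O(\beta n^2)$ times, so the greedy procedure halts after at most $O(\beta n^2)$ repairs, each of which is clearly implementable in polynomial time. This yields both the existence of a $(\beta, \beta^-)$-EDCS and its polynomial-time greedy construction.

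I expect the only genuinely delicate point — and the one I would check most carefully — to be the choice of the coefficient $2\beta - 1$ multiplying $\card{H}$: it must lie strictly between $2\beta^-$ and $2\beta$ for both deletions and insertions to increase $\Phi$, and the existence of such a coefficient is precisely what the strict gap $\beta > \beta^-$ guarantees.
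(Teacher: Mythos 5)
Your proof is correct and is essentially the standard argument for this proposition: the paper itself only cites \citet{BernsteinS15} without reproducing a proof, and the cited works establish existence via exactly this kind of potential function (theirs is $(\beta - \tfrac{1}{2})\card{H} - \tfrac{1}{2}\sum_v \deg_H(v)^2$, which is your $\Phi$ scaled by $\tfrac{1}{2}$), with the same local repair steps and the same termination bound. All of your computations, including the two $\Delta\Phi$ bounds and the role of $\beta - \beta^- \geq 1$, check out.
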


Moreover, since the edge-degrees in a $(\beta, \beta^-)$-EDCS are all upper bounded by $\beta$ by the first property of \cref{def:EDCS}, so are the vertex degrees. Therefore, the EDCS has at most $O(n\beta)$ edges. This means that smaller values of $\beta$ are more desirable as the subgraph picked will be a sparser.

It will be instructive to set $\beta = 2$ and $\beta^- = 1$. It can be easily confirmed that any $(2, 1)$-EDCS is a {\em maximal matching} (i.e., a matching that is not a subset of another matching) and that any maximal matching is a $(2, 1)$-EDCS. It is well-known that a maximal matching has at least half as many edges as a maximum matching and that this bound is tight.\footnote{Take $G$ to be a path with 3 edges and take $H$ to be the subgraph only containing the middle edge.} The key property of EDCS is that by slightly increasing $\beta$ and keeping $\beta^-$ close to it, the approximation ratio improves to almost 2/3. Formally:

\begin{proposition}[\citet{BernsteinS15,BernsteinS16,AssadiB19,corr/Behnezhad21}]\label{prop:key}
Given a graph $G$, and parameters $\beta \geq 1/\epsilon$ and $\beta^-\geq (1 - \epsilon)\beta$, any $(\beta, \beta^-)$-EDCS of $G$ contains a $(2/3 - O(\epsilon))$-approximate maximum matching of $G$.
\end{proposition}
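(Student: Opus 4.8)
The plan is to compare a maximum matching $M$ of the EDCS $H$ (which realizes $\mu(H)$) with a maximum matching $M^*$ of $G$, and prove $\card{M} \geq (2/3 - O(\epsilon))\card{M^*}$. I would analyze the symmetric difference $F = M \oplus M^*$, whose connected components are alternating paths and even cycles. Every even cycle and even path carries equally many $M$- and $M^*$-edges, and since $M^*$ is maximum in $G$ there are no augmenting paths for $M^*$; hence the only imbalanced components are the \emph{augmenting paths of $M$} (one more $M^*$-edge than $M$-edge), and $\card{M^*} - \card{M}$ equals exactly their number. The key leverage point is that $M$ is maximum in $H$, so no augmenting path lies entirely inside $H$: every augmenting path contains at least one edge of $G \setminus H$, on which I can invoke the second EDCS property.

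The barrier at $2/3$ comes from length-$3$ augmenting paths, which pit two $M^*$-edges against a single $M$-edge, so controlling the \emph{short} paths is the whole game. Consider such a path $u - x - y - v$ with $(u,x),(y,v) \in M^*$, $(x,y) \in M \subseteq H$, and $u,v$ unmatched by $M$. In the main case where the end-edges lie outside $H$, the second EDCS property gives $\deg_H(u) + \deg_H(x) \geq \beta^-$ and $\deg_H(y) + \deg_H(v) \geq \beta^-$; summing these and subtracting the first-property bound $\deg_H(x) + \deg_H(y) \leq \beta$ along the middle edge yields $\deg_H(u) + \deg_H(v) \geq 2\beta^- - \beta \geq (1-2\epsilon)\beta$. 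Thus the $M$-free endpoints of a short augmenting path are themselves forced to have large $H$-degree, and the slack $\beta - \beta^- \leq \epsilon\beta$ between the two degree bounds is precisely what will eventually produce the $O(\epsilon)$ loss.

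The final step is a global charging argument: every $H$-neighbor of an $M$-free endpoint is matched by $M$ (otherwise $M$ would not be maximal), so the $\Omega(\beta)$ incident $H$-edges guaranteed above all land on vertices covered by $M$. Charging these edges against the first EDCS property, which caps the total $H$-degree around each $M$-edge at $\beta$, bounds the number of free endpoints, and hence the number of short augmenting paths, in terms of $\card{M}$; length-$1$ augmenting paths (both endpoints free and contributing nothing to $M$) are the most expensive in this charge and therefore cannot be plentiful, leaving length-$3$ paths as the binding worst case at ratio $2/3$. I expect the main obstacle to be twofold: first, handling the configuration where an end-edge of a short augmenting path already lies in $H$ (which does not immediately contradict maximality and should be absorbed by a degree-preserving matching rotation that reassigns which endpoint is free); and second, carrying out the charge uniformly across augmenting paths of \emph{every} length while distributing each $M$-edge's budget of $\beta$ among the possibly many augmenting paths that touch it, without double counting, so that the aggregate inequality collapses cleanly to $\card{M} \geq (2/3 - O(\epsilon))\card{M^*}$.
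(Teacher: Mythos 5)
First, a point of reference: the paper does not prove \cref{prop:key} at all --- it is imported verbatim from the cited prior work, and the paper's own analysis of the EDCS approximation ratio goes through the factor-revealing LP of \cref{lp:main}, a methodology entirely different from your augmenting-path argument. So your sketch is really competing with the classical analytic proofs, not with anything in this paper.

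Your local observations are correct: a length-$3$ augmenting path $u-x-y-v$ whose end-edges avoid $H$ forces $\deg_H(u)+\deg_H(v)\ge 2\beta^--\beta$, and every $H$-neighbor of an $M$-free vertex is covered by $M$. The genuine gap is in the global charging step, and it is quantitative rather than cosmetic. Each short augmenting path issues roughly $\beta$ units of charge ($\beta^-$ for a length-$1$ path, $2\beta^--\beta$ for a length-$3$ path), while each edge of $M$ can absorb up to $\beta$ units, since its two endpoints have total $H$-degree at most $\beta$. This yields only $a_0+a_1\le (1+O(\epsilon))\card{M}$, where $a_0,a_1$ count length-$1$ and length-$3$ augmenting paths; combined with $\card{M}\ge a_1+2a_{\ge 2}$ it gives $\card{M^*}\le (2.5+O(\epsilon))\card{M}$, i.e.\ only a $0.4$-approximation. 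The $2/3$ bound requires $a_0+a_1\lesssim \card{M}/2$, a factor of $2$ stronger, and this factor is not slack: in the extremal instance of \cref{fig:tight} the number of augmenting paths is exactly $\card{M}/2$. Closing that factor of $2$ is the entire difficulty of the known proofs; it requires a second level of counting that exploits, e.g., that the charge emitted by a free vertex of $H$-degree $d$ lands only on vertices of $H$-degree at most $\beta-d$, all lying in a vertex cover (Hall witness) of size $\mu(H)$. Your sketch does not contain this idea. Separately, the case where an end-edge of a short augmenting path already lies in $H$ is handled only by an unspecified ``rotation'' that modifies $M$ and therefore perturbs every other component of $M\oplus M^*$; making that rigorous (say, by fixing $M$ to be extremal for a suitable potential before the decomposition) is a real piece of work, not an afterthought.
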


The approximation guarantee of \cref{prop:key} is close to optimal. In particular, for infinitely many choices of $$1 \leq \beta^- < \beta,$$ (particularly for all odd $\beta = 2k+1$ and all $\beta^- < \beta$) examples have been known since the original paper of \cite{BernsteinS15} where a $(\beta, \beta^-)$-EDCS does not include a better than 2/3-approximation. See \cref{fig:tight}. 

\begin{figure}
    \centering
    \includegraphics[width=0.7\textwidth]{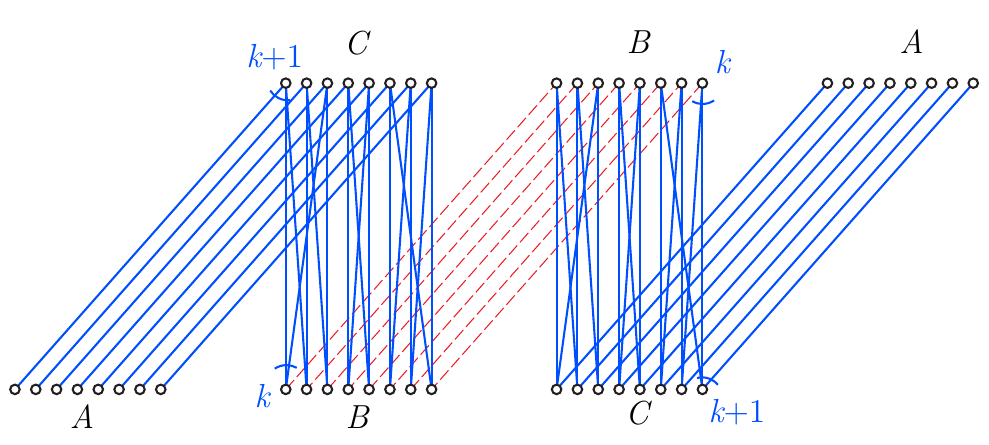}
    \caption{An example due to \cite{BernsteinS15} where a $(\beta, \beta-1)$-EDCS does not obtain a better than 2/3-approximation for any odd $\beta = 2k+1$. Here we have a bipartite graph, each side having three equally sized vertex sets $A, B, C$. The solid blue lines denote the EDCS edges, and the dashed red lines denote the edges not in the EDCS. The vertices in $B$ have degree $k$ in the EDCS, the vertices in $C$ have degree $k+1$ in the EDCS, and the vertices in $A$ have degree one. Note that any edge in the EDCS has edge degree at most $\beta = 2k+1$. The only edges missing from the EDCS are the dashed $B$-$B$ edges which all have edge degree exactly $\beta - 1 = 2k$. While the graph $G$ has a perfect matching, the EDCS can only match $2/3$ of the vertices.}
    \label{fig:tight}
\end{figure}

\subsection*{Our Contribution}

In most applications of EDCS, we would like to set $\beta$ to be as small as possible to achieve sparser subgraphs.\footnote{The only exception is the deterministic dynamic algorithm of \cite{BernsteinS15} for maintaining EDCS where larger values of $\beta$ help as they make the EDCS more ``robust'' to changes.} On the other hand, making $\beta$ smaller would make the approximation guarantee of \cref{prop:key} worse. It is therefore natural to study the trade-off between $\beta$ and the approximation ratio. Unfortunately, known proofs are too loose, especially, for small values of $\beta$. In this paper, we propose a new approach to analyze the approximation ratio of EDCS. Our analysis is tight and precisely pinpoints the exact approximation ratio achieved for any given $\beta$. \cref{tab:small-values} states the approximation ratio for various values of $\beta$ and $\beta^-$.

\paragraph{Our Results:} Our analysis reveals that the approximation ratio of $(\beta, \beta^-)$-EDCS, when $\beta^- = \beta-1$, behaves very differently from when $\beta^- < \beta - 1$. Note that this is still in the regime where a $(\beta, \beta^-)$-EDCS exists and can be found in polynomial time. For instance, for any odd value of $\beta \geq 7$, a $(\beta, \beta-1)$-EDCS obtains an exact $2/3$-approximation. The approximation turns out to be quite surprising when considering even values of $\beta \geq 6$. For instance, a $(6, 5)$-EDCS obtains a $.677$-approximation which is even better than $2/3 \sim .666$! While this may seem to contradict the example of \cref{fig:tight}, it has to be noted that \cref{fig:tight} requires $\beta$ to be odd and does not work when it is even. Moreover, as we increase $\beta$ in the even case, the approximation ratio gets worse and approaches $2/3$ (see \cref{fig:beta-minus-one}). We note that when $\beta$ is even, $\beta-1$, which is the edge-degree lower bound for missed edges, is odd. This means that the two endpoints of such missed edges must have different degrees. Such imbalance between the degrees of the vertices with missed optimal edges is precisely the reason for a better than 2/3 approximation. When we increase $\beta$, this imbalance becomes less significant (as the ratio of degrees gets closer and closer to 1) and so the approximation becomes worse. This is in sharp contrast with previous analysis such as the one in \cref{prop:key} where increasing $\beta$ improves the approximation.

We emphasize that going beyond 2/3-approximation for the maximum matching problem is considered a difficult task in many settings. We refer the interested reader to \cite{AssadiB21,BehnezhadRR23} where the precise problem of beating 2/3-approximation is studied in various settings. We hope that our discovery that a $(6, 5)$-EDCS beats 2/3-approximation combined with the known bound of \cref{prop:existence} that such EDCS's can be found via a simple greedy algorithm paves the way for future progress on this important question.

\paragraph{Our Analysis:} While previous analysis of EDCS were analytical, our analysis is based on a new factor-revealing linear program (formalized as \cref{lp:main}) that we show provides the exact approximation ratio of $(\beta, \beta^-)$-EDCS for any given parameters $\beta, \beta^-$. 

We then provide the claimed approximation guarantees by solving this LP. We note that a factor revealing LP has also been used to analyze a hierarchical version of EDCS in \cite{BehnezhadK22}. However, the factor revealing LP there is different from ours and, importantly, is not tight. For instance, the LP used by \cite{BehnezhadK22} only guarantees a $0.6$-approximation for a $(6, 5)$-EDCS which is way smaller than the correct bound of $0.677$ returned by our tight LP.
\section{Preliminaries} \label{sec:preliminaries}
In this section, we introduce the notations and definitions we use, and provide some background on matchings.

A graph $G$ is \emph{bipartite} if its vertices can be partitioned into two sets $L$ and $R$, such that every edge has exactly one endpoint in $L$ and one endpoint in $R$. We use $G(L, R)$ to denote a bipartite graph with partitions $L$ and $R$.

Given a set of vertices $A$, we use $N_G(A)$ (or $N(A)$ when $G$ is clear from the context) to denote the set of its neighbors. For a vertex $u$, we use $\deg_G(u)$ to denote its degree in $G$.

Given a graph $G$, a \emph{matching} is a subset of edges such that no two edges share an endpoint. We say that a matching $M$ covers a vertex $u$, or that $u$ is matched in $M$, if there is an edge adjacent to $u$ in $M$. A \emph{maximum matching} is a matching with the largest possible number of edges. We use $\mu(G)$ to denote the size of a maximum matching in $G$.

The following is a well-known primal-dual result for bipartite maximum matching.
\begin{proposition}[Extended Hall's Theorem]
\label{prp:hall}
Given a bipartite graph $G(L, R)$,
it holds that
$$
\mu(G) = \min_{A \subseteq L} \card{N(A)} + \card{L \setminus A}.
$$
The vertex set that minimizes the right-hand side is referred to as a \emph{Hall's witness} for $G$. 
Furthermore, for every maximum matching $M$ and Hall's witness $A$, every edge of $M$ covers exactly one vertex in $N(A) \cup (L \setminus A)$.
\end{proposition}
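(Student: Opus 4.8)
The plan is to derive both halves of the proposition from König's theorem (equivalently, the deficiency form of Hall's theorem), which states that in any bipartite graph the maximum matching size equals the minimum vertex cover size. The bridge between König's formulation and the one stated here is the observation that sets of the form $C_A := N(A) \cup (L \setminus A)$ are precisely the vertex covers one needs to consider.

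First I would establish the minimax identity. For the direction $\mu(G) \le \min_A (\card{N(A)} + \card{L \setminus A})$, I would check that $C_A$ is a vertex cover: any edge $(u,v)$ with $u \in L$ either has $u \in A$, forcing $v \in N(A) \subseteq C_A$, or has $u \in L \setminus A \subseteq C_A$. Since $N(A) \subseteq R$ and $L \setminus A \subseteq L$ are disjoint, $\card{C_A} = \card{N(A)} + \card{L \setminus A}$, and by elementary weak duality (every vertex cover has size at least that of every matching) each such quantity upper bounds $\mu(G)$. For the reverse direction I would invoke König to take a minimum vertex cover $C$ of size exactly $\mu(G)$, write $C_L = C \cap L$ and $C_R = C \cap R$, and set $A := L \setminus C_L$, so that $L \setminus A = C_L$. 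Because no vertex of $A$ lies in $C$ while $C$ covers every edge, every neighbor of $A$ must lie in $C_R$, giving $N(A) \subseteq C_R$. Hence $\card{N(A)} + \card{L \setminus A} \le \card{C_R} + \card{C_L} = \card{C} = \mu(G)$, which completes the identity and shows that the minimizing $A$ is a Hall's witness.

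For the ``furthermore'' part, I would fix a maximum matching $M$ and a Hall's witness $A$, and set $C := N(A) \cup (L \setminus A)$, which by the above is a vertex cover of size exactly $\mu(G) = \card{M}$. I would then double-count incidences between edges of $M$ and vertices of $C$. On one hand, since $C$ is a cover, each of the $\card{M}$ edges of $M$ contributes at least one incidence. On the other hand, since $M$ is a matching, each vertex of $C$ lies on at most one edge of $M$, so the number of incidences is at most $\card{C}$. As $\card{M} = \card{C}$, both bounds are tight, which forces every edge of $M$ to have exactly one endpoint in $C = N(A) \cup (L \setminus A)$, as claimed.

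The only genuinely hard input is König's theorem itself, used in the reverse direction of the minimax identity; I would simply cite it as classical rather than reprove it via augmenting paths or max-flow--min-cut. Everything else is bookkeeping, and the crux of the structural claim is recognizing that the tightness $\card{M} = \card{C}$ squeezes the incidence count from both sides, leaving no room for an edge of $M$ to meet $C$ twice.
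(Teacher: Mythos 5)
Your proof is correct and complete. Note that the paper itself does not prove this proposition: it states it as a well-known primal--dual fact and only sketches the easy (weak duality) direction in the caption of its figure, so there is no ``paper proof'' to match against. Your route through K\"onig's theorem is the standard one and every step checks out: the sets $C_A = N(A) \cup (L \setminus A)$ are exactly the vertex covers of the form needed, the map $C \mapsto A = L \setminus (C \cap L)$ converts a minimum cover into a minimizing witness with $N(A) \subseteq C \cap R$, and the double count of incidences between $M$ and $C_A$ is precisely the squeeze that forces each matching edge to meet $N(A) \cup (L\setminus A)$ exactly once (an edge with one endpoint in $L \setminus A$ and the other in $N(A)$ would contribute two incidences, which the equality $\card{M} = \card{C_A}$ rules out). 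The one thing worth making explicit, since the paper uses it later, is that the same tightness argument also shows every vertex of $N(A) \cup (L \setminus A)$ is matched in $M$ --- this is the fact invoked as validity condition (1) in Section 3 --- but that is an immediate byproduct of your incidence count, not a gap.
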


\begin{figure}[h]
    \centering
    \includegraphics[width=0.7\textwidth]{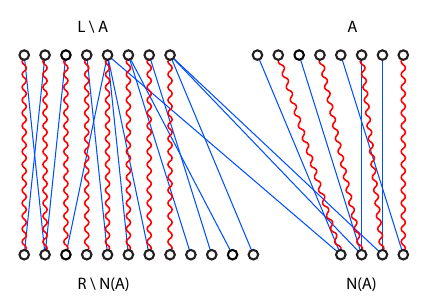}
    \caption{An example of Hall's witness. 
    The curvy red edges denote a maximum matching, and $A$ is a Hall's witness. 
    Generally, considering a vertex set $A \subseteq L$ and any matching, the vertices in $A$ are matched to a subset of $N(A)$. Therefore, even if all the vertices in $L \setminus A$ are somehow matched, the matching has size at most $\card{N(A)} + \card{L\setminus A}$, i.e.\ $\mu(G) \leq \card{N(A)} + \card{L\setminus A}$. Hall's theorem states that there exists a vertex set $A$, referred to as a Hall's witness, for which this inequality is tight.} 
    \label{fig:halls}
\end{figure}

\section{A New Analysis of the EDCS via a Factor-Revealing LP}
\label{sec:lp}
In this section, we present the linear program that \enquote{reveals} the approximation ratio of EDCS for fixed parameters $(\beta, \beta^-)$.
We prove that it is tight, i.e.\ any pair $(G, H)$ consisting of a bipartite graph $G(L, R)$ and a $(\beta, \beta^-)$-EDCS $H \subseteq G$ can be converted to a feasible solution of the LP and vice versa.

The edges and the vertices of the graph are divided into groups based on their properties and their role in the graph (e.g.\ for a vertex this includes its degree, whether it is matched in the maximum matching, where it is in the Hall's witness, etc.) We refer to these properties as a \emph{vertex profile} or an \emph{edge profile}, and use $VP$ and $EP$ to denote the set of valid vertex profiles and edge profiles, respectively.
Then, a variable is designated to each group of vertices or edges with the same profile.
Its value is set (proportionally) to the number of vertices or edges in that group.

The easiest way to understand the LP is to see how a pair $(G, H)$ is converted to a feasible solution, where $G$ is a bipartite graph, and $H$ is a $(\beta, \beta^-)$-EDCS for $G$.
First, fix a maximum matching $M^*$ of $G$ and a maximum matching $M$ of $H$, along with a Hall's witness $A$ for $H$, i.e.\ $\card{M} = \mu(H) = \card{N_H(A)} + \card{L \setminus A}$.
Without loss of generality, we can assume that $G$ contains no edges other than $M^* \cup H$.
Because those edges can be removed from the graph, in which case $\mu(G)$ and $\mu(H)$ remain unchanged while $H$ is still a $(\beta, \beta^-)$-EDCS.

We divide the vertices and the edges into different groups.
All the vertices or edges in a group have the same properties, a.k.a.\ profile, which will be defined shortly.
Each variable of the LP then reflects how many vertices or edges with each profile there are in the graph.
We scale all the numbers by $1/\mu(H)$,
that is if there are $k$ vertices with a certain profile, then the variable corresponding to that vertex profile holds the value $k/\mu(H)$.
As a result, the variables corresponding to the edges in $M$ sum to $1$, 
the variables corresponding to the edges in $M^*$ sum to the approximation ratio $\mu(G) / \mu(H)$ which we set as the objective function of our LP to be \emph{maximized}.

Now, we define the vertex profiles. To do so, we consider all the possible cases of the following properties for a vertex:
\begin{itemize}
    \item whether it is in $A$, $L \setminus A$, $N_H(A)$, or $R \setminus N_H(A)$,
    \item its degree in $H$, an integer between $0$ and $\beta - 1$,
    \item whether it is matched in $M$, and
    \item whether it is matched in $M^*$.
\end{itemize}
Since we create a variable for each vertex profile, we make sure to use only valid vertex profiles, so that we do not create \enquote{extra} variables.
That is, we have to confirm that the properties above in a profile make sense together. 
We have a total of ten validity conditions, two for the vertex profiles and (as explained later) eight for the edge profiles. These conditions are enforced when formulating the LP, and they are not a part of the LP itself.
Specifically, for the vertex profiles, we assert the following:
\begin{enumerate}
    \item \label{cnd:hall1} if a vertex is in $N_H(A)$ or $L \setminus A$, then it must be matched in $M$, because $M$ is a maximum matching of $H$ and $A$ is a Hall's witness for $H$ (see \cref{prp:hall}), and
    \item if a vertex has zero degree in $H$, then it must be unmatched in $M$, since having an edge in $M$ would mean having degree at least $1$.
\end{enumerate}
After considering all the cases and discarding the vertex profiles that do not satisfy the aforementioned conditions, we create a variable for each vertex profile. Observe that any vertex in the graph has a valid profile, i.e.\ its properties satisfy the two conditions. Finally, to set the values in our feasible solution, if there are $n_\vp$ vertices with profile $\vp$, we let the variable $x_\vp$ corresponding to that vertex profile be equal to $n_\vp/\mu(H)$.

To define an edge profile, we consider all the possible cases of the following properties for an edge:
\begin{itemize}
    \item the vertex profiles of its endpoints in $L$ and $R$,
    \item whether it is in $H$,
    \item whether it is in $M$, and
    \item whether it is in $M^*$.
\end{itemize}
Checking the validity of an edge profile takes more work. This is where we enforce the bulk of the properties that $G$, $H$, and $M$ have.
Note that if a profile is invalid, we do not create a variable for it at all.
First, some simple consistency conditions:
\begin{enumerate}[resume]
    \item if the edge is in $H$, then its endpoint vertex profiles must indicate a nonzero degree in $H$,
    \item if the edge is in $M$ (resp.\ $M^*$), its endpoint vertex profiles must indicate that they are matched in $M$ (resp.\ $M^*$),
    \item if an edge is in $M$, it must be (by definition) in $H$, and
    \item there are no edges outside $H \cup M^*$ (see the beginning of \cref{sec:lp}).
\end{enumerate}
Then, the conditions concerning the EDCS $H$ (see \cref{def:edcs}):
\begin{enumerate}[resume] 
    \item \label{cnd:edcs1} if an edge is in $H$, the degrees of its endpoints in $H$ must sum to at most $\beta$, and
    \item \label{cnd:edcs2}if an edge is not in $H$, the degrees of its endpoints in $H$ must sum to at least $\beta^-$.
\end{enumerate}
Finally, the conditions for the Hall's witness $A$ (see \cref{prp:hall}):
\begin{enumerate}[resume]
    \item \label{cnd:hall2} there should be no edges of $H$ between $A$ and $R \setminus N_H(A)$ (by definition), and
    \item \label{cnd:hall3} for every edge of $M$, \emph{exactly} one of the following holds: ($i$) it has an endpoint in $N(A)$, or ($ii$) it has an endpoint in $L \setminus A$.
\end{enumerate}
Similar to the vertices, we create a variable for each valid edge profile (i.e.\ an edge profile that satisfies all the conditions), and set their values proportional to the number of edges with that profile. That is, if there are $n_\ep$ edges with profile $\ep$, the variable $x_\ep$ corresponding to that profile is set equal to $\frac{n_\ep}{\mu(H)}$. This completes the explanation of how the variables are meant to correspond to a graph.

To tie this all together, we need to add the constraints.
Most of the properties of $G$, $H$, and $M$ are already encoded in the vertex/edge profiles.
The purpose of the constraints is to link the number of vertices to the number of edges adjacent to them, and to ensure that the variables corresponding to $M$ sum to $1$, i.e.\ everything is scaled by $1/\mu(H)$.

Recall, $VP$ and $EP$ denote the set of valid vertex profiles and edge profiles, respectively.
For every profile $\vp \in VP$ (resp.\ $\ep \in EP$),
we create a variable, and denote it by $x_{\vp}$ (resp.\ $x_{\ep}$).
We use $H(\vp)$ (resp.\ $M(\vp)$, $M^*(\vp)$)
to denote the set of edge profiles that are in $H$ (resp.\ $M$, $M^*$) and one of their endpoints has profile $\vp$.
We also use $\ep \in M^*$ to denote that the edges with profile $\ep$ are in $M^*$, and $\vp_{\deg}$ to denote the degree of vertices with profile $\vp$ in $H$. The LP is then as follows:
\begin{lp}
The factor-revealing LP for the approximation ratio of $(\beta, \beta^-)$-EDCS
\label{lp:main}
\begin{flalign*}
    \text{maximize } \sum\limits_{\substack{\ep \in EP \\ \ep \in M^*}} x_{\ep} & \\
    \sum_{\ep \in H(\vp)} x_{\ep} &= \vp_{\deg} \cdot x_{\vp} & \forall \vp \in VP \\
    \sum_{\ep \in M(\vp)} x_{\ep} &= x_{\vp} & \substack{\forall \vp \in VP \\ \text{ that is matched in $M$}} \\
    \sum_{\ep \in M^*(\vp)} x_{\ep} &= x_{\vp} & \substack{\forall \vp \in VP \\ \text{ that is matched in $M^*$}} \\
    \sum_{\ep \in M} x_\ep &= 1 \\
    x_{\ep} &\geq 0 &\forall \ep \in EP \\
    x_{\vp} &\geq 0 &\forall \vp \in VP \\
\end{flalign*}
\end{lp}
To see how the first three constraints tie the vertex variables to edge variables, take the first set of constraints as an example.
These constraints state that for a vertex profile that has degree $d$ in $H$, if there are $k$ vertices with this profile, then there should be $d \cdot k$ edges of $H$ adjacent to these vertices. The fourth constraint simply states that everything is scaled by $1/\mu(H)$, therefore the variables corresponding to the edges of $M$ sum to $1$ (recall $\card{M} = \mu(H)$). With that, we are ready to state the main theorem of this section.

\begin{theorem}
    \label{thm:main}
    The optimal objective value for \cref{lp:main} is equal to the approximation ratio of $(\beta, \beta^-)$-EDCS. That is, if the optimal value for \cref{lp:main} is $r$, then for every bipartite graph $G$ and every $(\beta, \beta^-)$-EDCS $H \subseteq G$, it holds that $r \cdot \mu(H) \geq \mu(G)$.
    Furthermore, there exists an instance where this inequality is tight. 
\end{theorem}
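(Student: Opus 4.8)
The plan is to prove the theorem in two directions, matching the two claims: (i) that the LP value $r$ is an upper bound on the approximation ratio, i.e.\ $r \cdot \mu(H) \geq \mu(G)$ for \emph{every} valid pair $(G, H)$, and (ii) that there is an instance where equality holds, so that the LP value is exactly the approximation ratio. The bulk of the work has in fact already been set up in the extensive discussion preceding the theorem, where a map from pairs $(G, H)$ to LP solutions is described. So the first task is simply to verify carefully that this map lands in the feasible region.

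\emph{Direction (i): soundness.} First I would fix an arbitrary bipartite graph $G$ with $(\beta, \beta^-)$-EDCS $H$, and (as noted in \cref{sec:lp}) assume without loss of generality that $G$ has no edges outside $M^* \cup H$, where $M^*$ is a fixed maximum matching of $G$. Then I fix a maximum matching $M$ of $H$ together with a Hall's witness $A$ (using \cref{prp:hall} applied to $H$). This data assigns to each vertex and each edge a profile, and I would set $x_{\vp} = n_{\vp}/\mu(H)$ and $x_{\ep} = n_{\ep}/\mu(H)$ as described. The key step is to check that this is a \emph{feasible} solution. Nonnegativity is immediate. The degree constraint $\sum_{\ep \in H(\vp)} x_{\ep} = \vp_{\deg} \cdot x_{\vp}$ is a double-counting identity: summing, over all $n_{\vp}$ vertices of profile $\vp$, the $\vp_{\deg}$ incident $H$-edges counts each incident edge once and each such edge has a well-defined profile. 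The matching constraints are similar but simpler since each matched vertex has exactly one incident $M$- (resp.\ $M^*$-) edge. The normalization $\sum_{\ep \in M} x_{\ep} = 1$ follows because $|M| = \mu(H)$. Crucially, I must confirm that every profile actually arising in the graph satisfies all ten validity conditions, so that these variables are legitimately indexed by $VP$ and $EP$; this is exactly what conditions \ref{cnd:hall1}--\ref{cnd:hall3} were engineered to guarantee, with \ref{cnd:edcs1}, \ref{cnd:edcs2} coming from \cref{def:edcs} and \ref{cnd:hall2}, \ref{cnd:hall3} from the Hall's witness structure in \cref{prp:hall}. Since the objective $\sum_{\ep \in M^*} x_{\ep}$ equals $\mu(G)/\mu(H)$ (each $M^*$-edge has a profile and the count is $|M^*| = \mu(G)$), and the LP maximizes over all feasible points, we get $\mu(G)/\mu(H) \leq r$, i.e.\ $r \cdot \mu(H) \geq \mu(G)$.

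\emph{Direction (ii): tightness.} The harder and more delicate direction is the converse: given an optimal LP solution $\{x_{\vp}, x_{\ep}\}$ with value $r$, I must \emph{construct} a bipartite graph $G$ and a $(\beta, \beta^-)$-EDCS $H$ whose approximation ratio is (arbitrarily close to) $r$. The plan is to first clear denominators, rescaling the optimal solution to rational values and then to a common integer multiple $N$, so that each $N x_{\vp}$ and $N x_{\ep}$ is a nonnegative integer; these integers become the intended vertex and edge counts. The construction then instantiates, for each profile, that many vertices/edges and stitches them together so that the incidence relations dictated by the three equality constraints can actually be realized — essentially building a bipartite multigraph realizing the prescribed degree sequences per profile and then arguing one can take it simple. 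I would then have to verify that the resulting $H$ genuinely satisfies both EDCS conditions (guaranteed profile-wise by \ref{cnd:edcs1}, \ref{cnd:edcs2}), that $M$ is a maximum matching of $H$ with $A$ a genuine Hall's witness (so that $\mu(H)$ equals the count coming from the normalization constraint, via \cref{prp:hall}), and that $M^*$ is a maximum matching of $G$ of the claimed size.

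I expect the main obstacle to be precisely this realizability step in direction (ii): the LP constraints only enforce that degrees \emph{balance in aggregate} per profile, but do not by themselves certify that an actual graph realizing these profile-to-profile incidences exists, nor that in that graph $M$ is maximum and $A$ is a valid Hall's witness rather than merely a candidate set. In particular I must ensure no ``extra'' augmenting structure appears that would make $\mu(H)$ larger than intended, and that the edge set between profile classes can be laid out consistently (a degree-sequence / bipartite-realization argument, possibly needing the integer scaling to be chosen large enough to avoid parity or small-count pathologies). Handling the ``arbitrarily close'' slack — since the LP is over reals but graphs are discrete — is where the $N \to \infty$ scaling is invoked, and I would argue the approximation ratio of the constructed family converges to $r$.
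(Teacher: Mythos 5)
Your overall strategy coincides with the paper's: direction (i) is exactly the paper's first claim, and your verification of feasibility (double counting) and of the objective value is complete and correct. The problems are confined to direction (ii), where you correctly identify the delicate points but leave them unresolved, and in one case propose a workaround that is too weak for the stated theorem.

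First, the tightness claim is exact, not asymptotic. Your closing paragraph retreats to an ``arbitrarily close'' statement via $N \to \infty$, which does not prove that there exists an instance where the inequality is tight. The fix is the one the paper uses: since \cref{lp:main} has rational coefficients, it admits a \emph{rational} optimal solution (e.g.\ at a vertex of the feasible polytope), and clearing denominators with a single integer $N$ gives exact integer counts $n_{\vp} = N x_{\vp}$, $n_{\ep} = N x_{\ep}$, hence an instance with $\mu(G)/\mu(H)$ exactly equal to $r$. (Also, one cannot ``rescale'' an optimal solution to rational values; one must invoke the existence of a rational optimum.) Second, and more substantively, you flag as an open worry that $M$ might fail to be maximum in the constructed $H$ (``extra augmenting structure''), but this is precisely what conditions \ref{cnd:hall1}, \ref{cnd:hall2}, and \ref{cnd:hall3} were baked into the profiles to rule out: in the constructed graph, condition \ref{cnd:hall2} guarantees that $N_H(A)$ is contained in the set of vertices labelled as such, so \cref{prp:hall} gives $\mu(H) \leq \card{N_H(A)} + \card{L \setminus A}$, while conditions \ref{cnd:hall1} and \ref{cnd:hall3} force $\card{M} = \card{N_H(A)} + \card{L \setminus A}$; hence $M$ is maximum no matter how the edges are stitched together, and no augmenting-path analysis is needed. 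The maximality of $M^*$ is dispatched by optimality of the LP solution (a larger matching of $G$ would yield a feasible point with larger objective), or alternatively by direction (i). The remaining realizability step --- laying out $n_{\ep}$ edges between endpoint groups consistently with the degree constraints --- is handled in the paper by processing the edge profiles of $M^*$, then $M \setminus M^*$, then $H \setminus (M \cup M^*)$ in that order; this is the one place where the paper is also terse, and your instinct that a bipartite degree-realization argument is needed there is reasonable, but as written your proposal does not carry out any of these three verifications.
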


We prove the theorem by showing the following two claims hold.

\begin{claim}
    The optimal value of \cref{lp:main}
    is an upper bound for the approximation ratio of $(\beta, \beta^-)$-EDCS.
    That is, any pair of a bipartite graph $G$ and $(\beta, \beta^-)$-EDCS $H \subseteq G$
    can be converted to a feasible solution of $\cref{lp:main}$ such that the objective value is equal to $\mu(G) / \mu(H)$.
\end{claim}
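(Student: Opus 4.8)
The plan is to take an arbitrary pair $(G, H)$, where $G(L, R)$ is bipartite and $H \subseteq G$ is a $(\beta, \beta^-)$-EDCS, and exhibit an explicit assignment to the LP variables that is feasible and has objective value exactly $\mu(G)/\mu(H)$. Since the LP is a \emph{maximization}, producing any such feasible solution shows $\mu(G)/\mu(H) \le r$, which gives the claimed upper bound $r \cdot \mu(H) \ge \mu(G)$. First I would invoke the reductions already set up in the text: fix a maximum matching $M^*$ of $G$, a maximum matching $M$ of $H$, and a Hall's witness $A$ for $H$ so that $\card{M} = \mu(H) = \card{N_H(A)} + \card{L \setminus A}$; discard all edges of $G$ outside $M^* \cup H$, which changes neither $\mu(G)$, $\mu(H)$, nor the EDCS property of $H$. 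With these objects fixed, every vertex acquires a well-defined profile (its location relative to $A$ and $N_H(A)$, its $H$-degree, and its $M$- and $M^*$-status) and every surviving edge acquires an edge profile.

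Next I would define the solution by counting: for each vertex profile $\vp \in VP$ set $x_\vp = n_\vp / \mu(H)$, where $n_\vp$ is the number of vertices with that profile, and likewise $x_\ep = n_\ep / \mu(H)$ for each edge profile. The bulk of the work is a verification that splits into two independent parts. The first part confirms that this assignment only places mass on \emph{valid} profiles, i.e.\ that every vertex and every edge of $G$ genuinely satisfies all ten validity conditions; most of these are immediate from the definitions, but Conditions~\ref{cnd:hall1}, \ref{cnd:hall2}, and \ref{cnd:hall3} require Extended Hall's Theorem (\cref{prp:hall}), and Conditions~\ref{cnd:edcs1} and \ref{cnd:edcs2} require the two defining inequalities of the EDCS (\cref{def:edcs}). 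The second part checks the four equality constraints. The degree constraints $\sum_{\ep \in H(\vp)} x_\ep = \vp_{\deg} \cdot x_\vp$ and the matching constraints $\sum_{\ep \in M(\vp)} x_\ep = x_\vp$, $\sum_{\ep \in M^*(\vp)} x_\ep = x_\vp$ all reduce, after multiplying through by $\mu(H)$, to double-counting incidences: the number of $H$-edges (resp.\ $M$-edges, $M^*$-edges) incident to the $n_\vp$ vertices of profile $\vp$ equals $\vp_{\deg}\, n_\vp$ (resp.\ $n_\vp$, $n_\vp$), the last two because $M$ and $M^*$ are matchings so each matched vertex contributes exactly one incident matching edge. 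The normalization constraint $\sum_{\ep \in M} x_\ep = 1$ is just $\card{M}/\mu(H) = 1$, and the objective evaluates to $\sum_{\ep \in M^*} x_\ep = \card{M^*}/\mu(H) = \mu(G)/\mu(H)$ as desired.

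The main obstacle I anticipate is not any single constraint but rather a bookkeeping subtlety in the incidence-counting arguments: because each edge has \emph{two} endpoints, a naive count could either double-count an edge or fail to match the per-profile grouping used in the sums $H(\vp)$, $M(\vp)$, $M^*(\vp)$. The clean way to handle this is to fix a vertex profile $\vp$ and count, over the $n_\vp$ vertices carrying it, the edges incident to them of each type; since $H(\vp)$ is by definition the set of edge profiles lying in $H$ with \emph{one} endpoint of profile $\vp$, the grouping is consistent and the factor $\vp_{\deg}$ appears exactly once. The other delicate point is Condition~\ref{cnd:hall3}: I would lean directly on the ``furthermore'' clause of \cref{prp:hall}, which guarantees that every edge of a maximum matching covers exactly one vertex of $N(A) \cup (L \setminus A)$, so that each $M$-edge satisfies the required exclusive-or and hence only valid edge profiles receive positive mass. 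Once these incidence identities and validity checks are in place, feasibility and the value of the objective follow immediately, completing the proof of the claim.
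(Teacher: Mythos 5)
Your proposal is correct and follows essentially the same route as the paper: fix $M^*$, $M$, and a Hall's witness $A$, set each variable to the count of its profile divided by $\mu(H)$, observe that every vertex and edge falls into exactly one valid profile so the constraints hold by incidence counting, and evaluate the objective as $\card{M^*}/\mu(H) = \mu(G)/\mu(H)$. The paper states the feasibility verification more tersely (``the constraints hold automatically''), whereas you spell out the double-counting and the role of \cref{prp:hall}, but the argument is the same.
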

\begin{proof}
    The reduction has been partially explained.
    Given $G$ and $H$, we fix a maximum matching $M^*$ in $G$,
    a maximum matching $M$ in $H$,
    and a Hall's witness $A$ for $H$.
    Then for every vertex profile $\vp$ (similarly for every edge profile $\ep$), if the number of vertices with that profile is $n_{\vp}$, we set the value of the corresponding variable $x_{\vp}$ equal to $n_{\vp}/\mu(H)$.
    Note that any vertex (similarly edge) of $G$ corresponds to exactly one vertex profile and the constraints hold automatically. Therefore, $x$ is feasible.

    Now we calculate the objective value for $x$ (denote it by $r$).
    $$
    r = \sum_{\ep\in M^*} x_{\ep}
    = \frac{\sum_{\ep\in M^*} n_{\ep}}{\mu(H)} = \frac{\mu(G)}{\mu(H)},
    $$
    which concludes the proof.
\end{proof}

\begin{claim}
    The optimal value of \cref{lp:main}
    is a lower bound for the approximation ratio of $(\beta, \beta^-)$-EDCS.
    That is, an optimal solution of \cref{lp:main} with objective value $r$ can be converted to a bipartite graph $G$ and a $(\beta, \beta^-)$-EDCS $H \subseteq G$ such that $\mu(G) / \mu(H) = r$.
\end{claim}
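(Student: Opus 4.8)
The plan is to invert the reduction of the previous claim: starting from an optimal (hence rational) solution $x$ of \cref{lp:main}, I would build an explicit pair $(G, H)$ realizing every profile count. First I would clear denominators, fixing a large integer $N$ so that all $N x_{\vp}$ and $N x_{\ep}$ are integers; then for each vertex profile $\vp \in VP$ I create $n_\vp := N x_\vp$ vertices carrying that profile (placing those with an ``in $A$'' or ``in $L\setminus A$'' profile in $L$ and the rest in $R$), and I set aside $n_\ep := N x_\ep$ edges to be realized for each edge profile $\ep$. The goal is a simple bipartite graph in which (i) every profile-$\vp$ vertex has $H$-degree exactly $\vp_{\deg}$, (ii) the edges whose profile lies in $M$ (resp.\ $M^*$) form a matching, and (iii) each realized edge joins endpoints of the two profiles prescribed by its edge profile. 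I then take $H$ to be the union of the $H$-edge profiles and set $G := H \cup M^*$.

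The technical heart is this realization step, and it is where I expect the main difficulty. After scaling, the first family of constraints of \cref{lp:main} reads $\sum_{\ep \in H(\vp)} n_\ep = \vp_{\deg}\, n_\vp$, so for each profile $\vp$ the total number of incident $H$-edge-endpoints equals $\vp_{\deg}$ times the number of profile-$\vp$ vertices; this is exactly the balance condition of a transportation problem with supplies $n_\ep$ (one per incident edge profile) and uniform demand $\vp_{\deg}$ per vertex. Since the margins are integral, an integral assignment giving every vertex exactly $\vp_{\deg}$ endpoints exists; doing this on both sides and pairing the $L$- and $R$-endpoint slots of each edge profile produces the edges. For the matching profiles the constraints $\sum_{\ep\in M(\vp)} n_\ep = n_\vp$ and the analogue for $M^*$ force each matched vertex to receive exactly one such endpoint, so $M$ and $M^*$ come out as genuine matchings automatically. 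The one point needing care is simplicity: to rule out parallel edges (including an $M^*$-edge coinciding with an $H$-edge on the same vertex pair), I would blow $N$ up by a further factor and spread each vertex's incident edges across distinct neighbors, which is possible once each profile class is large enough.

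Once $(G, H)$ is constructed, the verification can be read directly off the validity conditions. That $H$ is a $(\beta, \beta^-)$-EDCS follows because actual degrees equal profile degrees, so \cref{def:edcs} is guaranteed by validity conditions~\ref{cnd:edcs1} and~\ref{cnd:edcs2}. To certify $\mu(H) = \card{M}$ I would use the profile set $A$ as a Hall's witness: condition~\ref{cnd:hall2} gives that $N_H(A)$ is contained in the ``$N_H(A)$''-profiled set $S$, while conditions~\ref{cnd:hall1} and~\ref{cnd:hall3} set up a bijection between the edges of $M$ and the (disjoint) set $S \cup (L\setminus A)$, whence $\card{M} = \card{S} + \card{L\setminus A} \geq \card{N_H(A)} + \card{L\setminus A} \geq \mu(H) \geq \card{M}$ by \cref{prp:hall}. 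All inequalities are therefore equalities, so $\mu(H) = \card{M}$.

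Finally, for the objective I would avoid constructing a separate witness for $G$ and instead close the loop with the other direction. Since $M^*$ is a matching of $G$ of size $\sum_{\ep\in M^*} n_\ep = r\,\mu(H)$, we get $\mu(G)/\mu(H) \geq r$; and the upper-bound direction proved in the previous claim, applied to the very pair $(G, H)$ just built, gives $\mu(G)/\mu(H) \leq r$ because $r$ is the optimal value of \cref{lp:main}. Hence $\mu(G)/\mu(H) = r$ exactly, which is what the claim asserts.
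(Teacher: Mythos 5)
Your proposal is correct and follows essentially the same route as the paper: take a rational optimal solution, clear denominators to get integral profile counts, realize the profiles as an explicit pair $(G,H)$, read off the EDCS property from conditions~\ref{cnd:edcs1}--\ref{cnd:edcs2} and $\mu(H)=\card{M}$ from the Hall's witness conditions, and use optimality of the LP to pin down $\mu(G)$. Your treatment is somewhat more careful than the paper's at two points the paper glosses over --- the integral realization of the edge profiles (via a transportation argument, with a blow-up of $N$ to keep the graph simple) and the explicit equality chain certifying $\card{M}=\mu(H)$ --- but these are refinements of the same argument, not a different one.
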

\begin{proof}
    Take a rational optimal solution $x$ (note that since the coefficients in the constraints are rational, there exists a rational optimal solution to the LP).
    Because $x$ is rational, there exists a positive integer $N$ such that $n_{\vp} = N \cdot x_{\vp}$ and $n_{\ep} = N \cdot x_{\ep}$ is an integer for all profile $\vp$ and $\ep$.

    We create a graph $G$ with exactly $n_{\vp}$ vertices with profile $\vp$, and $n_{\ep}$ edges with profile $\ep$.
    To start, we create a group of $n_{\vp}$ vertices for each vertex profile $\vp$.
    To connect them, for each edge profile $\ep$ we use $n_{\ep}$ edges between the two endpoint groups of $\ep$ (recall that each edge profile indicates the vertex profiles of its endpoints). This can be done because of the LP constraints that link the number of edges to the number of vertices.
    More specifically, one can start with the edge profiles of $M^*$ and match unmatched vertices from the endpoint groups.
    Then, move on to the edge profiles of $M \setminus M^*$
    and match unmatched vertices (w.r.t.\ $M$) from the endpoint groups.
    Finally, go over the edge profiles of $H \setminus (M \cup M^*)$
    and connect the vertices from the endpoint groups to achieve their designated degree in $H$.

    Now that we have $G$, we define $H$, $M$, $M^*$, and $A$ simply by considering what the vertex/edge profiles indicate.
    $H$ is a $(\beta, \beta^-)$-EDCS of $G$
    because of the EDCS validity conditions for edge profiles (conditions \ref{cnd:edcs1} and \ref{cnd:edcs2}).
    $M$ is a maximum matching of $H$ since it is coupled with the Hall's witness $A$.
    That is, there is a vertex set $A \subseteq L$ such that each edge of $M$ matches exactly one vertex from $(L \setminus A) \cup N(A)$ (conditions \ref{cnd:hall1}, \ref{cnd:hall2}, and \ref{cnd:hall3}).
    Finally, $M^*$ is a maximum matching since otherwise we could choose a larger matching as $M^*$ and derive a feasible solution with a larger objective value, which contradicts optimality.

    Now we calculate the approximation ratio in this instance:
    $$
    \frac{\mu(G)}{\mu(H)}
    = \frac{\sum_{\ep\in M^*} n_{\ep}}{\mu(H)}
    = \sum_{\ep\in M^*} x_{\ep}.
    $$
    The right-hand side is the objective value for $x$, i.e.\ the optimal value of \cref{lp:main}, which concludes the proof.
\end{proof}

Putting the two claims together, gives \cref{thm:main}.
\section{Numerical Solutions}
\label{sec:numerical}

\subsection{Setup}
All of our code\footnote{The implemented code can be found at the following \href{https://github.com/mohammadroghani/Tight-Analysis-of-EDCS}{link}.} is written in Python (version 3.10.12) and is available in the supplementary material. For solving factor-revealing LP instances, we utilized the Gurobi optimization package (version 11.0.0). The experiments were conducted on a computing cluster equipped with 64 cores, each running at 2.30GHz on Intel(R) Xeon(R) processors, and with 756 GiB of main memory. The operating system used was Ubuntu 22.04.3 LTS.

\subsection{Results}
We implement the factor-revealing LP in \Cref{sec:lp} for all possible values of $\beta, \beta^{-} \in [1, 100]$ where $\beta > \beta^{-}$. The largest approximation ratio that we obtain using the factor-revealing LP is 0.6774 which is achieved for parameter $\beta = 6$ and $\beta^{-}=5$ (see \Cref{tab:small-values} and \Cref{fig:heatmap}).

\begin{table*}[t]
  \centering
  \resizebox{\columnwidth}{!}{%
      \begin{tabular}{|c||c|c|c|c|c|c|c|c|c|c|c|c|}
      \hline
         $\beta \backslash \beta^{-}$& 1 & 2 & 3 & 4 & 5 & 6 & 7 & 8 & 9 & 10 & 11\\
        \hline
    2 & 0.5  & -  & -  & -  & -  & -  & -  & -  & -  & -  & - \\
    \hline
    3 & 0.3333  & 0.5  & -  & -  & -  & -  & -  & -  & -  & -  & - \\
    \hline
    4 & 0.25  & 0.4  & 0.625  & -  & -  & -  & -  & -  & -  & -  & - \\
    \hline
    5 & 0.2  & 0.3333  & 0.4782  & 0.6249  & -  & -  & -  & -  & -  & -  & - \\
    \hline
    6 & 0.1666  & 0.2857  & 0.4117  & 0.5  & 0.6774  & -  & -  & -  & -  & -  & - \\
    \hline
    7 & 0.1428  & 0.25  & 0.3617  & 0.4444  & 0.5604  & 0.6666  & -  & -  & -  & -  & - \\
    \hline
    8 & 0.125  & 0.2222  & 0.3225  & 0.4  & 0.4827  & 0.5783  & 0.6756  & -  & -  & -  & - \\
    \hline
    9 & 0.1111  & 0.2  & 0.2911  & 0.3636  & 0.4399  & 0.5  & 0.6097  & 0.6666  & -  & -  & - \\
    \hline
    10 & 0.1  & 0.1818  & 0.2653  & 0.3333  & 0.4042  & 0.4615  & 0.539  & 0.6153  & 0.6721  & -  & - \\
    \hline
    11 & 0.0909  & 0.1666  & 0.2436  & 0.3076  & 0.3739  & 0.4285  & 0.4862  & 0.5569  & 0.625  & 0.6666  & - \\
    \hline
    12 & 0.0833  & 0.1538  & 0.2253  & 0.2857  & 0.3478  & 0.3999  & 0.4545  & 0.5  & 0.5796  & 0.625  & 0.6703 \\
    \hline
    
      \end{tabular}
  }
  \caption{The approximation ratio achieved by EDCS obtained from the factor-revealing LP in \Cref{sec:lp}. Rows correspond to $\beta$, and columns correspond to $\beta^{-}$. The table presents the approximation ratio for all possible $(\beta, \beta^{-})$-EDCS when $1 \leq \beta^{-} < \beta \leq 12$. A hyphen in a cell of the table indicates that there is no EDCS with the corresponding parameters.}
  \label{tab:small-values}
\end{table*}

\begin{figure}[p]
    \centering
    \includegraphics[width=0.7\textwidth]{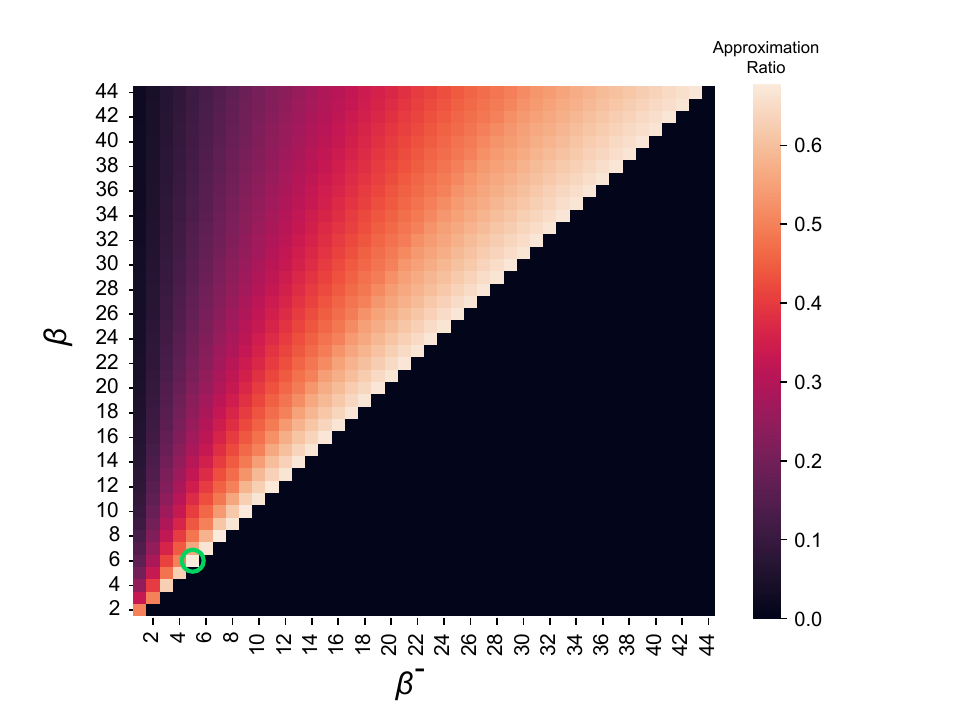}
    \caption{Heatmap shows the approximation ratio of $(\beta, \beta^{-})$-EDCS for different values of $\beta^{-}$ (x-axis) and $\beta$ (y-axis). The best approximation ratio is shown in a green circle for $\beta = 6$ and $\beta^{-} = 5$.}
    \label{fig:heatmap}
\end{figure}

The result of the factor-revealing LP shows that for $(\beta, \beta-1)$-EDCS when $\beta$ is an even number larger than 4, the approximation ratio of the EDCS is larger than 2/3 and as $\beta$ grows, the approximation ratio converges to 2/3 (see \Cref{fig:beta-minus-one}).

\begin{figure}[p]
    \centering
    \includegraphics[width=0.7\textwidth]{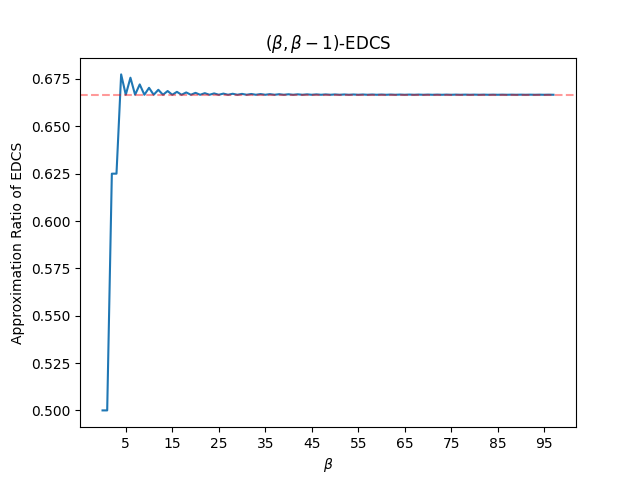}
    \caption{The approximation ratio of $(\beta, \beta-1)$-EDCS is computed for all possible values of $\beta \leq 100$. The y-axis denotes the approximation ratio, while the x-axis corresponds to the values of $\beta$. The horizontal red dashed line represents the approximation ratio $2/3$, which was previously believed to be the best possible approximation ratio of EDCS. The largest approximation ratio occurs when $\beta = 6$.}
    \label{fig:beta-minus-one}
\end{figure}

\begin{table}[p]
  \centering
  \begin{tabular}{|c||c|c|}
  \hline
     $\beta$ & $(\beta, \beta - 1)$-EDCS & $(\beta, \beta - 2)$-EDCS\\
    \hline
    5 & 0.6249 & 0.4782 \\ \hline

6 & 0.6774 & 0.5 \\ \hline

7 & 0.6666 & 0.5604 \\ \hline

8 & 0.6756 & 0.5783 \\ \hline

9 & 0.6666 & 0.6097 \\ \hline

10 & 0.6721 & 0.6153 \\ \hline

20 & 0.6678 & 0.6428 \\ \hline

30 & 0.6671 & 0.6511 \\ \hline

40 & 0.6669 & 0.6551 \\ \hline

50 & 0.6668 & 0.6575 \\ \hline

60 & 0.6667 & 0.659 \\ \hline

70 & 0.6667 & 0.6601 \\ \hline

80 & 0.6667 & 0.661 \\ \hline

90 & 0.6667 & 0.6616 \\ \hline

100 & 0.6667 & 0.6621 \\ \hline
\hline

  \end{tabular}
  \caption{The approximation ratio achieved by EDCS using the factor-revealing LP in \Cref{sec:lp}. Rows correspond to $\beta$, and columns correspond to instances of the EDCS that we are using (either $(\beta, \beta -1)$-EDCS or $(\beta, \beta -2)$-EDCS). The table presents the approximation ratio for some of the values of $\beta \leq 100$. $(\beta, \beta^{-})$-EDCS when $1 \leq \beta^{-} < \beta < 13$. As discussed before, for $(\beta, \beta - 1)$-EDCS, the approximation ratio is larger than 2/3 when $\beta$ is even and it converges to 2/3. On the other hand, the approximation ratio of $(\beta, \beta-2)$-EDCS is always smaller than 2/3 and converges to 2/3 as $\beta$ grows.}
  \label{tab:beta-1-beta-2}
\end{table}

On the other hand, for all other values of $(\beta, \beta^{-})$, the approximation ratio is always below 2/3, and for a constant integer $c \geq 2$, the approximation ratio of $(\beta, \beta - c)$-EDCS converge to 2/3 as $\beta$ goes to infinity (see \Cref{fig:different-betas} and \Cref{tab:beta-1-beta-2}).

\begin{figure}[p]
    \centering
    \includegraphics[width=0.65\textwidth]{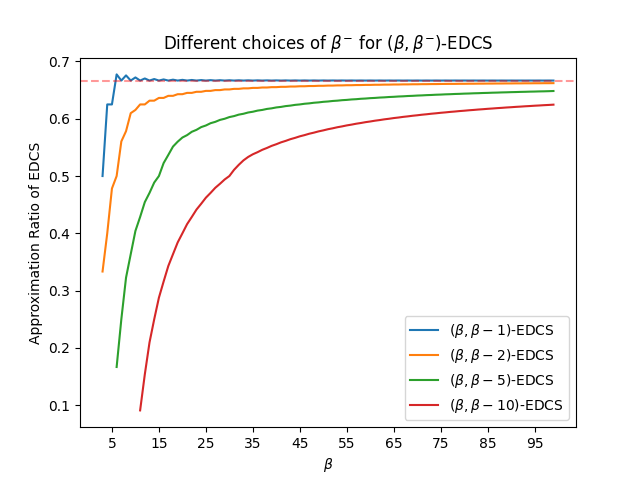}
    \caption{The approximation ratio of $(\beta, \beta-c)$-EDCS is calculated for various choices of $c \in [1, 2, 5, 10]$ across all feasible values of $\beta \leq 100$. The y-axis represents the approximation ratio, and the x-axis corresponds to the values of $\beta$.  The horizontal red dashed line represents the approximation ratio $2/3$. As $c$ increases, the approximation ratio worsens, and the convergence to 2/3 occurs at a slower rate.}
    \label{fig:different-betas}
\end{figure}

\section{Conclusion}

Over the recent years, EDCS has proven to be a successful matching sparsifier in various applications, including MPC model, stochastic matching model, sublinear time model, dynamic model, and streaming model. Many state-of-the-art results have been achieved by employing EDCS as a matching sparsifier in these settings.

It is well-known that for large values of $\beta$ and $\beta^{-}$, the approximation ratio of EDCS is $2/3-\epsilon$. This paper provides a tight analysis of the approximation ratio of $(\beta, \beta^{-})$-EDCS for small values of $\beta$ and $\beta^-$ using a factor-revealing linear program. Remarkably, we discover that when $\beta$ is even, the $(\beta, \beta-1)$-EDCS has an approximation ratio greater than 2/3, a previously unknown result. Our findings reveal that the maximum achievable approximation ratio is 0.6774 when $\beta = 6$ and $\beta^- = 5$. We hope the discovery that a (6, 5)-EDCS surpasses the 2/3 approximation, combined with the known bound that EDCS can be obtained through a simple greedy algorithm, opens avenues for future advancements in solving the maximum matching problem in different settings.

\bibliographystyle{plainnat}
\bibliography{references}
	
\end{document}